     \def\section{\@startsection{section}{1}%
     \z@{.7\linespacing\@plus\linespacing}{.5\linespacing}%
     {\bfseries
     \centering
     }}
     \def\@secnumfont{\bfseries}
\newtheorem{theorem}{Theorem}[section]
\newtheorem{proposition}[theorem]{Proposition}
\theoremstyle{definition}
\newtheorem{definition}[theorem]{Definition}
\theoremstyle{remark}
\newtheorem{remark}[theorem]{Remark}
\numberwithin{equation}{section}
\begin{document}

\title[Uncertain Volatility]{Modelling Uncertain Volatility Using Quantum Stochastic Calculus: Unitary vs Non-Unitary Time Evolution}

\author{Will Hicks}
\address{Will Hicks: Centre for Quantum Social and Cognitive Science, Memorial University of Newfoundland}
\email{williamh@mun.ca}

\subjclass[2010] {Primary 81S25; Secondary 91G20}

\keywords{Quantum Stochastic Calculus, Quantum Black-Scholes, Uncertain Parameters}

\begin{abstract}
In this article we look at stochastic processes with uncertain parameters, and consider different ways in which information is obtained when carrying out observations. For example we focus on the case of a the random evolution of a traded financial asset price with uncertain volatility. The quantum approach presented, allows us to encode different volatility levels in a state acting on a Hilbert space. We consider different means of defining projective measurements in order to track the evolution of a traded market price, and discuss the results of different Monte-Carlo simulations.
\end{abstract}

\maketitle


\section{Introduction}
In \cite{AB}, and \cite{WH}, the authors consider the stochastic evolution of a price variable in the tensor product of a Hilbert space with a Boson Fock space (containing the random noise element of the stochastic evolution):
\begin{align}\label{Hilbert_1}
\mathcal{H}&=\mathcal{H}_{mkt}\otimes\Gamma(L^2(\mathbb{R}^+;\mathbb{C}))
\end{align}
The observables considered are of the form:
\begin{align*}
X\otimes\mathbb{I}
\end{align*}
where $X$ is a self-adjoint operator acting on the Hilbert space: $\mathcal{H}_{mkt}$. Time evolution is via unitary operators, and we work in the Heisenberg interpretation, so that at time $t$, the observables take the form:
\begin{align}\label{UXU}
j_t(X)&=U_t^*(X\otimes\mathbb{I})U_t
\end{align}
where, as shown in \cite{HP}, the unitary operators $U_t$ satisfy:
\begin{align}\label{unitary}
dU_t&=-\Big(\big(iH+\frac{L^*L}{2}\big)dt+L^*dA_t-LdA^{\dagger}_t\Big)U_t
\end{align}
Here, $dA_t$ and $dA^{\dagger}_t$ introduce random noise in the space $\Gamma(L^2(\mathbb{R}^+;\mathbb{C}))$, so are strictly given by: $1\otimes dA_t$, and $1\otimes dA^{\dagger}_t$. $H$ is the system Hamiltonian, and $L$ scales the degree of variance introduced by the stochastic process into the noise space: $\Gamma\big(L^2(\mathbb{R}^+;\mathbb{C})\big)$ (we abbreviate as $\Gamma$). For example in \cite{WH} the author uses:
\begin{align}\label{L_simple}
\mathcal{H}_{mkt}&=L^2(\mathbb{R})\nonumber\\
L&=-i\sigma\frac{\partial}{\partial x}\otimes\mathbb{I}
\end{align}
This leads to a Gaussian process for $j_t(X)$ with total variance: $\sigma^2t$.

In some cases, for example when looking at a physical diffusion process, one may very well be able to measure the diffusivity, which is therefore known in advance. In financial markets, this is not really the case. When one observes traded market prices evolving, one does not know and cannot really measure, what the value of $\sigma$ is, versus what the precise value for the random noise element is.

When pricing financial derivatives, to overcome this, one often makes an assumption regarding the stochastic processes involved, before calibrating parameters, such as the variance, to some external source. For example, the well known case of the Black-Scholes implied volatility derived from the prices of exchange traded options.

Once one has chosen the form of the stochastic process, and fixed the parameters by some calibration process, knowledge of the change in the market price fixes the value of the Gaussian noise (and vice versa).

The Hilbert space framework of quantum probability provides a method by which one can simulate price changes without fixing the value of uncertain parameters. In this article, we consider the inclusion of an additional Hilbert space that reflects the state of market volatility:
\begin{align}\label{Hilbert_2}
\mathcal{H}&=\mathcal{H}_{mkt}\otimes\mathcal{H}_{\sigma}\otimes\Gamma(L^2(\mathbb{R}^+;\mathbb{C}))
\end{align}
We consider how to define a projective measurement on this Hilbert space, and how the nature of the information that we collect from the volatility space determines the dynamics of the process as we model the time evolution into the future.
\section{Hamilton Jacobi Bellman Approach:}\label{HJB_app}
In \cite{Av}, and \cite{Lyons}, the authors address the problem of pricing options on a risky asset that satisfies a stochastic differential equation with uncertain volatility:
\begin{align*}
dX_t&=\sigma_tX_tdW_t\\
\sigma_t&\in [\sigma_{min},\sigma_{max}]
\end{align*}
They look for a solution that maximises the valuation of an option, with a view to presenting a sell side price. In other words, a price that presents a high likelihood of being able to yield a profit. This approach leads to a Hamilton Jacobi Bellman equation:
\begin{align*}
\partial_tu(t,x)&+\frac{1}{2}x^2\Sigma\big(\partial_x^2u(t,x)\big)^2\partial_x^2u(t,x)=0\\
\Sigma(\Gamma)&=\sigma_{min}1_{(\Gamma<0)}+\sigma_{max}1_{(\Gamma>0)}
\end{align*}
In this article, we consider a similar problem. That is where the volatility is defined by a quantum state acting on a Hilbert space. This enables us to identify a risk neutral price. We go on to show how the model can be calibrated to market prices in practice, whilst allowing the dynamics of the model to be defined by the quantum state acting on the space $\mathcal{H}_{\sigma}$ and a system Hamiltonian.

However, we note here the similarity of the underlying assumption. That is that the risky underlying follows a Brownian motion, but where the volatility is not known in advance.
\section{Uncertain Volatility: Quantum Stochastic Approach}
\subsection{Finite Dimensional Volatility Space:}
In this section, we consider using the finite dimensional Hilbert space: $\mathbb{C}^K$ for $\mathcal{H}_{\sigma}$, and $L^2(\mathbb{R})$ for $\mathcal{H}_{mkt}$. We write the eigenvectors for $\mathcal{H}_{\sigma}$ as: $\{|s_i\rangle\text{, }i=1\dots k\}$. In this instance we can change the operator \eqref{L_simple} that defines the stochastic process to be:
\begin{align}\label{L_no_Ham}
L&=-i\frac{\partial}{\partial x}\otimes\sum_{k=1}^K\sigma_k|s_k\rangle\langle s_k|\otimes\mathbb{I}
\end{align}
where $\sigma_k$ is the eigenvalue associated with eigenvector $|s_k\rangle$. In other words, rather than scaling the variance added to the noise space: $\Gamma$ by a fixed volatility parameter: $\sigma$, as in equation \eqref{L_simple}, we allow the action of the state on $\mathcal{H}_{\sigma}$ to define the level of the volatility. This of course allows for the possibility that the volatility is not known in advance. Under the new Hilbert space structure \eqref{Hilbert_2}, equation \eqref{UXU} becomes:
\begin{align}\label{UXU2}
j_t(X)&=U_t^*(X\otimes\mathbb{I}\otimes\mathbb{I})U_t\\
X\psi(x)&=(x\psi)(x)\text{, }\psi(x)\in L^2(\mathbb{R})\nonumber
\end{align}
Note that $X$ acts as a multiplication operator on the market Hilbert space: $\mathcal{H}_{mkt}$. Under the unitary time evolution defined by equation \eqref{unitary}, the operator \eqref{UXU} satisfies the following (see for example \cite{AB}, \cite{HP}):
\begin{align}\label{dX}
dj_t(X)&=j_t(\alpha)dA_t+j_t(\alpha^{\dagger})dA^{\dagger}_t\\
\alpha&=[L^*,X]\nonumber\\
dj_t(X^2)&=j_t(\alpha\alpha^{\dagger})dt\nonumber
\end{align}
In the simple case \eqref{L_simple} we get:
\begin{align}\label{Com_Simple}
[L^*,X]&=-i\sigma\\
dj_t(X^2)&=\sigma^2dt\nonumber
\end{align}
For the case \eqref{L_no_Ham}/\eqref{UXU2}, we have:
\begin{align}\label{Com_noHam}
[L^*,X]&=-i\otimes\sum_{k=1}^K\sigma_k|s_k\rangle\langle s_k|\otimes\mathbb{I}\\
j_t(X^2)&=\mathbb{I}\otimes\sum_{k=1}^K\sigma_k^2|s_k\rangle\langle s_k|\otimes dt
\end{align}
Note that the variance for the simple process, defined by equation \eqref{L_simple}, is not dependent on the quantum state acting on $\mathcal{H}_{mkt}\otimes\Gamma$. However, for the model defined by the Hilbert space \eqref{Hilbert_2}, and equation \eqref{L_no_Ham}, we require knowledge of the quantum state to configure the different possibilities for the resulting variance. For simplicity, we assume the overall state is a product state:
\begin{align}\label{prod_state}
\rho&=\rho_{mkt}\otimes\rho_{\sigma}\otimes|\psi_{\Gamma}(0)\rangle\langle\psi_{\Gamma}(0)|
\end{align}
where we assume the Fock space starts in the vacuum state: $|\psi_{\Gamma}(0)\rangle\langle\psi_{\Gamma}(0)|$. We can now consider different options for $\rho_{\sigma}$. First, we might have:
\begin{align*}
\rho_{\sigma}&=\sigma |s_n\rangle\langle s_n|
\end{align*}
in which case, the model reduces to that defined by equation \eqref{Hilbert_1} and \eqref{L_simple}:
\begin{align*}
dj_t(X^2)&=\sigma^2dt
\end{align*}
Alternatively, we could have a maximum entropy state:
\begin{align}\label{rho_sig}
\rho_{\sigma}&=\frac{1}{K}\sum_{k=1}^K|s_k\rangle\langle s_k|
\end{align}
In fact, to understand the behaviour of the model under the state \eqref{rho_sig}, we must define how much information about the quantum state we retain as we model further into the future. This is discussed further in section: \eqref{Comment_U_NU}.
\subsection{Projective Measurements:}
We assume that we have a stochastic process: equation \eqref{UXU2}, where the volatility is given by equation \eqref{L_no_Ham}. That is, we have a Brownian diffusion with uncertain volatility. In this article we consider a Monte-Carlo simulation of this process, whereby the price is measured at finite intervals. There are two processes by which the state of our system (observables plus state) can change. The first is via the unitary time evolution that introduces noise into the Boson Fock space. The second is via a measurement process, which we define in proposition \eqref{gen_meas_prop}. Before doing so, we define the hybrid interpretation that we will be using (in place of the Schr{\"o}dinger picture or Heisenberg picture).
\begin{proposition}{Hybrid Interpretation:}\label{hybrid_prop}
Assume that the initial state is given by (per equation \eqref{prod_state}):
\begin{align*}
\rho(0)&=\rho_{mkt}\otimes\rho_{\sigma}\otimes|\psi(0)\rangle\langle\psi(0)|
\end{align*}
Assume that the Hilbert space is given by equation \eqref{Hilbert_2}, and the unitary time evolution by equations \eqref{unitary} and \eqref{L_no_Ham}. Furthermore, assume that the measurements are represented by a projection operator, that we label: $\mathcal{P}_x$, for the measured value $x$. Finally we assume that we take measurements at time intervals: $\tau_n$, where we have:
\begin{align*}
T_n&=\sum_{i\leq n}\tau_i
\end{align*}
Then the probability of measuring the value $j_{T_n}(X)=x$, conditional on the measured values: $y_i$ at $T_i: i=\{1,\dots n-1\}$, is given by:
\begin{align}\label{hybrid}
p(x|y_i,i=1,\dots &,n-1)=Tr[\rho(T_{n-1})j_{T_n}(\mathcal{P}_x)]\\
\rho(T_{n-1})&=\frac{j_{T_{n-1}}(\mathcal{P}_{y_{n-1}})\dots j_{T_1}(\mathcal{P}_{y_1})\rho(0)j_{T_1}(\mathcal{P}_{y_1})\dots j_{T_{n-1}}(\mathcal{P}_{y_{n-1}})}{Tr[j_{T_{n-1}}(\mathcal{P}_{y_{n-1}})\dots j_{T_1}(\mathcal{P}_{y_1})\rho(0)j_{T_1}(\mathcal{P}_{y_1})\dots j_{T_{n-1}}(\mathcal{P}_{y_{n-1}})]}\nonumber
\end{align}
\end{proposition}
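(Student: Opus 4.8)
The plan is to build the probability formula inductively on $n$, mirroring the standard von Neumann--L\"uders measurement postulate but carried out in the Heisenberg-type (``hybrid'') picture where observables, projections, and the state all carry their time labels through conjugation by the unitaries $U_t$. The key structural input is that, because time evolution is implemented by unitaries on the full Hilbert space \eqref{Hilbert_2}, conjugating any operator $A$ by $U_t$ preserves products, adjoints, and positivity, so in particular $j_t(\mathcal{P}_x) = U_t^*(\mathcal{P}_x\otimes\mathbb{I}\otimes\mathbb{I})U_t$ is again an orthogonal projection, and the family $\{j_t(\mathcal{P}_x)\}_x$ is a projection-valued measure on the market spectrum for each fixed $t$. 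This lets us treat $j_{T_i}(\mathcal{P}_{y_i})$ exactly as a Schr\"odinger-picture L\"uders projector, the only difference being the explicit time label.

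First I would fix notation: write $E_i := j_{T_i}(\mathcal{P}_{y_i})$ for the projection recording the outcome $y_i$ at time $T_i$, and $F := j_{T_n}(\mathcal{P}_x)$ for the projection whose outcome probability we want. Then I would state the base case $n=1$: the probability of observing $x$ at $T_1$ starting from $\rho(0)$ is $\mathrm{Tr}[\rho(0)\, j_{T_1}(\mathcal{P}_x)]$, which is just the Born rule transported to time $T_1$ via the unitary $U_{T_1}$ — this is the content of \eqref{UXU2}-style evolution combined with the standard rule $\mathrm{Tr}[U^*\rho U \cdot P] = \mathrm{Tr}[\rho\, U P U^*]$. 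For the inductive step, I would assume that after recording $y_1,\dots,y_{n-1}$ the (normalized, conditioned) state is the $\rho(T_{n-1})$ displayed in \eqref{hybrid}, and then apply the L\"uders update one more time: conditioning on having just seen $y_{n-1}$ at $T_{n-1}$ sends $\rho(T_{n-2})$ to $E_{n-1}\rho(T_{n-2})E_{n-1}/\mathrm{Tr}[E_{n-1}\rho(T_{n-2})E_{n-1}]$, and unrolling the recursion gives exactly the nested sandwich $E_{n-1}\cdots E_1\,\rho(0)\,E_1\cdots E_{n-1}$ over its trace. Applying the Born rule once more with the projection $F=j_{T_n}(\mathcal{P}_x)$ yields $p(x\mid y_1,\dots,y_{n-1}) = \mathrm{Tr}[\rho(T_{n-1})\,F]$, which is \eqref{hybrid}.

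The two points that need genuine care — and which I would flag as the main obstacle — are (i) justifying that the L\"uders rule is the right update in this hybrid picture rather than an ad hoc choice, and (ii) checking consistency between the measurement-induced updates and the continuous unitary dynamics \eqref{unitary} that acts in between measurements. For (i), the cleanest route is to pass momentarily to the Schr\"odinger picture on a fixed reference time, perform the ordinary L\"uders update there, and then transport back; because all the intervening evolutions are unitary, the $U_{T_i}$ factors telescope and one is left precisely with the $j_{T_i}(\mathcal{P}_{y_i})$ conjugations as written, with no leftover unitaries. For (ii) one must verify that inserting a projection $j_{T_{i}}(\mathcal{P}_{y_i})$ at time $T_i$ and then continuing to evolve to $T_{i+1}$ is the same as what the formula asserts; this works because $j_{T_{i+1}}(\cdot)$ and $j_{T_i}(\cdot)$ are both defined by conjugation relative to the \emph{same} physical vacuum/initial state, so the noise increments $dA_t, dA_t^\dagger$ between $T_i$ and $T_{i+1}$ are absorbed consistently into the definition of $U_{T_{i+1}}$ — there is no ``reset'' of the Fock space needed. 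I would also remark that positivity of each partial trace $\mathrm{Tr}[E_{n-1}\cdots E_1\rho(0)E_1\cdots E_{n-1}]$ (needed for $\rho(T_{n-1})$ to be a bona fide state) follows from writing it as $\mathrm{Tr}[(E_1\cdots E_{n-1})^* \rho(0)(E_1\cdots E_{n-1})^*{}^*]$... more simply, as $\|\rho(0)^{1/2} E_1\cdots E_{n-1}\|_2^2 \ge 0$, and that the conditional probabilities so defined sum to one over $x$ because $\sum_x j_{T_n}(\mathcal{P}_x) = \mathbb{I}$, which holds since $j_{T_n}$ is a unital $*$-homomorphism and $\sum_x \mathcal{P}_x = \mathbb{I}$ on $\mathcal{H}_{mkt}$.
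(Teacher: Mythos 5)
Your proposal is correct and follows essentially the same route as the paper: induction on the number of measurements, with the base case given by the Born rule transported through $U_{T_1}$ via cyclicity of the trace, and the inductive step carried out by passing to the Schr\"odinger picture over one interval, applying the L\"uders update, and telescoping the unitaries back into the $j_{T_i}(\mathcal{P}_{y_i})$ conjugations (the paper phrases this as inserting $U_{\tau}U_{\tau}^*$ pairs and invoking the Markov/cocycle property of the Hudson--Parthasarathy evolution, then obtaining the conditional from the joint probability by Bayes normalization). Your added remarks on positivity of the conditioned state and on $\sum_x j_{T_n}(\mathcal{P}_x)=\mathbb{I}$ are correct supplements not spelled out in the paper.
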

\begin{proof}
For the first time-step: $\tau_1$, we have in the Schr{\"o}dinger interpretation, the probability of obtaining the value $x$ at $T_1$ is given by:
\begin{align*}
p(x)&=Tr[U_{\tau_1}\rho(0)U_{\tau_1}^*\mathcal{P}_x]\\
&=Tr[\rho(0)U_{\tau_1}^*\mathcal{P}_xU_{\tau_1}]\\
&=Tr[\rho(0)j_{\tau_1}(\mathcal{P}_x)]\\
&=Tr[\rho(0)j_{T_1}(\mathcal{P}_x)]
\end{align*}
Therefore, the proposition holds for the first time-step. Now we assume that the proposition holds up to $T_{n-2}$. That is we assume the state at $T_{n-2}$ is given by the formula in equation \eqref{hybrid}, and we assume the operators at $T_{n-2}$ are given by: $j_{T_{n-2}}(\mathcal{P}_y)$.

For the time period $T_{n-2}$ to $T_{n-1}$ we work out the required probabilities using time evolution in the Schr{\"o}dinger interpretation, and show that this is equivalent to \eqref{hybrid}. The proposition then stands by induction.

In the Schr{\"o}dinger interpretation, the time evolution in the state from $T_{n-2}$ to $T_{n-1}$ is given by (we write $\rho^S$ to signify the Schr{\"o}dinger interpretation state): 
\begin{align*}
\rho^{S}(T_{n-1})&=U_{\tau_{n-1}}\rho(T_{n-2})U_{\tau_{n-1}}^*
\end{align*}
After the measurement has taken place (returning measured value $y$), but still at $T_{n-1}$, the state is given by:
\begin{align*}
j_{T_{n-2}}(\mathcal{P}_y)\rho^S(T_{n-1})j_{T_{n-2}}(\mathcal{P}_y)&=j_{T_{n-2}}(\mathcal{P}_y)U_{\tau_{n-1}}\rho(T_{n-2})U_{\tau_{n-1}}^*j_{T_{n-2}}(\mathcal{P}_y)
\end{align*}
The time evolution from $T_{n-1}$ to $T_n$, then gives:
\begin{align*}
&U_{\tau_n}j_{T_{n-2}}(\mathcal{P}_y)\rho^S(T_{n-1})j_{T_{n-2}}(\mathcal{P}_y)U_{\tau_n}^*\\
&=U_{\tau_n}j_{T_{n-2}}(\mathcal{P}_y)U_{\tau_{n-1}}\rho(T_{n-2})U_{\tau_{n-1}}^*j_{T_{n-2}}(\mathcal{P}_y)U_{\tau_n}^*
\end{align*}
Therefore, the joint probability for the measurements $x,y$ at times $T_{n}$ and $T_{n-1}$ respectively is given by:
\begin{align*}
p(x,y)&=Tr[U_{\tau_n}j_{T_{n-2}}(\mathcal{P}_y)U_{\tau_{n-1}}\rho(T_{n-2})U_{\tau_{n-1}}^*j_{T_{n-2}}(\mathcal{P}_y)U_{\tau_n}^*j_{T_{n-2}}(\mathcal{P}_x)]\\
&=Tr[j_{T_{n-2}}(\mathcal{P}_y)U_{\tau_{n-1}}\rho(T_{n-2})U_{\tau_{n-1}}^*j_{T_{n-2}}(\mathcal{P}_y)U_{\tau_n}^*j_{T_{n-2}}(\mathcal{P}_x)U_{\tau_n}]\\
&=Tr[U_{\tau_{n-1}}U_{\tau_{n-1}}^*j_{T_{n-2}}(\mathcal{P}_y)U_{\tau_{n-1}}\rho(T_{n-2})U_{\tau_{n-1}}^*j_{T_{n-2}}(\mathcal{P}_y)U_{\tau_n}^*j_{T_{n-2}}(\mathcal{P}_x)U_{\tau_n}]\\
&=Tr[j_{T_{n-1}}(\mathcal{P}_y)\rho(T_{n-2})U_{\tau_{n-1}}^*j_{T_{n-2}}(\mathcal{P}_y)U_{\tau_n}^*j_{T_{n-2}}(\mathcal{P}_x)U_{\tau_n}U_{\tau_{n-1}}]\\
&=Tr[j_{T_{n-1}}(\mathcal{P}_y)\rho(T_{n-2})U_{\tau_{n-1}}^*j_{T_{n-2}}(\mathcal{P}_y)U_{\tau_{n-1}}U_{\tau_{n-1}}^*U_{\tau_n}^*j_{T_{n-2}}(\mathcal{P}_x)U_{\tau_n}U_{\tau_{n-1}}]\\
&=Tr[j_{T_{n-1}}(\mathcal{P}_y)\rho(T_{n-2})j_{T_{n-1}}(\mathcal{P}_y)U_{\tau_{n-1}}^*U_{\tau_n}^*j_{T_{n-2}}(\mathcal{P}_x)U_{\tau_n}U_{\tau_{n-1}}]\\
&=Tr[j_{T_{n-1}}(\mathcal{P}_y)\rho(T_{n-2})j_{T_{n-1}}(\mathcal{P}_y)j_{T_n}(\mathcal{P}_x)]
\end{align*}
where the last line follows from the Markov property. The result for the conditional probability (as opposed to the joint probability) follows from the Bayes probability law, which serves to normalize the state at $T_{n-1}$.
\end{proof}
Therefore, we can see that at each time-step the impact of the measurement is applied to the state whereas the stochastic process is applied to the observable as before. In order to apply proposition \eqref{hybrid_prop} in a Monte-Carlo simulation, we use the conditional probability: $p(x|y)$ given by:
\begin{align*}
p(x|y)&=Tr[\rho_yj_{\tau}(\mathcal{P}_x)]\\
E^{\rho_y}[f(j_{\tau}(X)]&=Tr[\rho_yf(j_{\tau}(X))]
\end{align*}
Where $\tau$ is the Monte-Carlo time-step, and $\rho_y$ is defined by proposition \eqref{gen_meas_prop}.
\begin{proposition}{General Market Measurement:}\label{gen_meas_prop}
Let $j_t(X)$ be the stochastic process defined by equation \eqref{dX}, where the volatility is defined by equation \eqref{L_no_Ham}. Assume that the result of a measurement of the change in $j_t(X)$ from $t=0$ to $t=T$ is the value $x$. Then after the measurement, the state is represented by:
\begin{align}\label{eigenvector}
\rho_x&=|\psi_{mkt}\rangle\langle\psi_{mkt}|\otimes\sum_{k=1}^K|q_k|^2|s_k\rangle\langle s_k|\otimes|\varepsilon_k\rangle\langle\varepsilon_k|\\
|\varepsilon_k\rangle\langle\varepsilon_k|&=\frac{\mathcal{P}^{\Gamma}_k\Big(\int_0^TidA^{\dagger}_t-idA_t\Big)|\psi_{\Gamma}(0)\rangle\langle\psi_{\Gamma}(0)|\mathcal{P}^{\Gamma}_k}
{Tr\Big[\mathcal{P}^{\Gamma}_k\Big(\int_0^TidA^{\dagger}_t-idA_t\Big)|\psi_{\Gamma}(0)\rangle\langle\psi_{\Gamma}(0)|\mathcal{P}^{\Gamma}_k\Big]}\nonumber
\end{align}
where $\mathcal{P}^{\Gamma}_k$ projects onto the eigenspace for the Brownian motion to take the value:
\begin{align}\label{Brownian}
z_k=\frac{x}{\sigma_k}\pm \epsilon
\end{align}
and $\epsilon$ (small) reflects the precision of the measurement. The state $\rho_x$ will return the measurement $x$ (to the required precision) regardless of the choice of $q_k$.
\end{proposition}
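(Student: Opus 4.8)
The plan is to put the increment $\Delta_T:=j_T(X)-j_0(X)$ into closed form, read off the spectral projection of $\Delta_T$ attached to the outcome $x$, apply the projection postulate, and finally check that every density operator of the form \eqref{eigenvector} is supported in the range of that projection --- which is the content of the last sentence of the statement. By \eqref{dX} and \eqref{Com_noHam}, $dj_t(X)=j_t(\alpha)\,dA_t+j_t(\alpha^{\dagger})\,dA_t^{\dagger}$ with $\alpha=[L^*,X]=-i\,\mathbb{I}_{mkt}\otimes P\otimes\mathbb{I}_{\Gamma}$, where $P:=\sum_{k=1}^{K}\sigma_k|s_k\rangle\langle s_k|$. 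Since \eqref{L_no_Ham} carries no system Hamiltonian, the only market-space factors occurring in $\alpha$ and in $L$ are $\mathbb{I}_{mkt}$ and $-i\partial_x$, which commute, while the $\mathcal{H}_{\sigma}$-factors of $\alpha$ and of $L$ are both $P$; hence $\alpha$ commutes with $L$, with $L^*$, and with every coefficient operator of the QSDE \eqref{unitary}, whence $[\alpha,U_t]=0$ and $j_t(\alpha)=U_t^*\alpha U_t=\alpha$ for every $t$. Integrating \eqref{dX} then gives
\begin{align*}
\Delta_T\;=\;\int_0^{T}\alpha\,dA_t+\int_0^{T}\alpha^{\dagger}\,dA_t^{\dagger}\;=\;\mathbb{I}_{mkt}\otimes P\otimes B_T,\qquad B_T:=\int_0^{T}\big(i\,dA_t^{\dagger}-i\,dA_t\big),
\end{align*}
where $B_T$ is self-adjoint on $\Gamma$, commutes with $j_0(X)=X\otimes\mathbb{I}\otimes\mathbb{I}$ (so $\Delta_T$ is a legitimate observable), and in the vacuum has the law of Brownian motion at time $T$; in particular its spectrum is purely continuous.

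Next I would identify the outcome projection. Since $\Delta_T$ is the identity on $\mathcal{H}_{mkt}$ and acts as $\sigma_k B_T$ on the line $\mathbb{C}|s_k\rangle\subset\mathcal{H}_{\sigma}$, functional calculus gives, for the value $x$ recorded to precision $\epsilon$,
\begin{align*}
\mathcal{P}_x^{\Delta}\;=\;\sum_{k=1}^{K}\mathbb{I}_{mkt}\otimes|s_k\rangle\langle s_k|\otimes\mathcal{P}_k^{\Gamma},\qquad \mathcal{P}_k^{\Gamma}:=\mathbf{1}_{[\,x/\sigma_k-\epsilon,\,x/\sigma_k+\epsilon\,]}(B_T),
\end{align*}
because $\sigma_k B_T=x$ exactly when $B_T=x/\sigma_k$; this is the $z_k$ of \eqref{Brownian}. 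As $\Delta_T$ already contains no $U_t$ it is its own Heisenberg-picture form, so in the hybrid interpretation of Proposition~\eqref{hybrid_prop} the operator inserted into the state is exactly $\mathcal{P}_x^{\Delta}$, and the post-measurement state is $\mathcal{P}_x^{\Delta}\rho(0)\mathcal{P}_x^{\Delta}\big/Tr[\mathcal{P}_x^{\Delta}\rho(0)\mathcal{P}_x^{\Delta}]$. Expanding this with $\rho(0)=|\psi_{mkt}\rangle\langle\psi_{mkt}|\otimes\rho_{\sigma}\otimes|\psi_{\Gamma}(0)\rangle\langle\psi_{\Gamma}(0)|$ (market factor taken pure), and noting that the bands $[\,x/\sigma_k-\epsilon,\,x/\sigma_k+\epsilon\,]$ are pairwise disjoint once $x\neq0$ and $\epsilon<\tfrac12\min_{j\neq k}|x/\sigma_j-x/\sigma_k|$ (so the $\mathcal{P}_k^{\Gamma}$ are mutually orthogonal), the cross terms in $k$ vanish whenever $\rho_{\sigma}$ is diagonal in $\{|s_k\rangle\}$ (e.g.\ the maximum-entropy state \eqref{rho_sig}); what remains is a state of the form \eqref{eigenvector} with $|\varepsilon_k\rangle\langle\varepsilon_k|=\mathcal{P}_k^{\Gamma}|\psi_{\Gamma}(0)\rangle\langle\psi_{\Gamma}(0)|\mathcal{P}_k^{\Gamma}\big/\|\mathcal{P}_k^{\Gamma}|\psi_{\Gamma}(0)\rangle\|^2$ and $|q_k|^2\propto\langle s_k|\rho_{\sigma}|s_k\rangle\,\|\mathcal{P}_k^{\Gamma}|\psi_{\Gamma}(0)\rangle\|^2$. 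This agrees with \eqref{eigenvector} because inserting the extra factor $B_T$ written there changes nothing: on $\mathrm{Ran}\,\mathcal{P}_k^{\Gamma}$ the operator $\mathcal{P}_k^{\Gamma}B_T=B_T\mathcal{P}_k^{\Gamma}$ is multiplication by a number in $[\,x/\sigma_k-\epsilon,\,x/\sigma_k+\epsilon\,]$, so $\mathcal{P}_k^{\Gamma}B_T|\psi_{\Gamma}(0)\rangle$ is a scalar multiple of $\mathcal{P}_k^{\Gamma}|\psi_{\Gamma}(0)\rangle$ (exactly in the limit $\epsilon\to0$), and that scalar cancels against the normalising trace.

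For the last sentence I would argue directly: given any $(q_k)$ with $\sum_k|q_k|^2=1$, apply $\mathcal{P}_x^{\Delta}=\sum_j\mathbb{I}\otimes|s_j\rangle\langle s_j|\otimes\mathcal{P}_j^{\Gamma}$ to the operator \eqref{eigenvector} and use $|s_j\rangle\langle s_j|s_k\rangle\langle s_k|=\delta_{jk}|s_k\rangle\langle s_k|$ together with $\mathcal{P}_j^{\Gamma}|\varepsilon_k\rangle=\delta_{jk}|\varepsilon_k\rangle$ (the latter since $|\varepsilon_k\rangle\in\mathrm{Ran}\,\mathcal{P}_k^{\Gamma}$ and distinct bands are disjoint) to obtain $\mathcal{P}_x^{\Delta}\rho_x=\rho_x=\rho_x\mathcal{P}_x^{\Delta}$; hence $\mathrm{supp}\,\rho_x\subseteq\mathrm{Ran}\,\mathcal{P}_x^{\Delta}$, so $Tr[\rho_x\,\mathcal{P}_x^{\Delta}]=1$, and on $\mathrm{Ran}\,\mathcal{P}_x^{\Delta}$ one has $\Delta_T=\sigma_k B_T\in[\,x-\sigma_k\epsilon,\,x+\sigma_k\epsilon\,]$ sector by sector, so re-measuring $j_T(X)-j_0(X)$ returns the value $x$ up to the precision $\max_k\sigma_k\,\epsilon$, whatever the $q_k$ are. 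I expect the only real obstacle to be the spectral bookkeeping rather than any hard analysis: $B_T$ has purely continuous spectrum, so ``the eigenspace $B_T=x/\sigma_k$'' has to be read as the band $\mathbf{1}_{[\,x/\sigma_k-\epsilon,\,x/\sigma_k+\epsilon\,]}(B_T)$, one must carry $\epsilon>0$ throughout and check the bands remain disjoint --- which is exactly why $x=0$ must be excluded (there all bands coincide and the volatility sectors are no longer separated). The other point worth a line of justification is $j_t(\alpha)=\alpha$ on a suitable exponential-vector domain; this is where the absence of a Hamiltonian in \eqref{L_no_Ham} is used, and it is what makes the increment the clean tensor $\mathbb{I}_{mkt}\otimes P\otimes B_T$ --- with a nontrivial $H$ the increment no longer factors and the form of $\rho_x$ changes.
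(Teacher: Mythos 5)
Your proposal is correct, and its core computation is the one the paper uses: the integrated increment factors as $\mathbb{I}_{mkt}\otimes\big(\sum_k\sigma_k|s_k\rangle\langle s_k|\big)\otimes B_T$ with $B_T=\int_0^T(i\,dA_t^{\dagger}-i\,dA_t)$, and every vector $|\psi_{mkt}\rangle\otimes\sum_kq_k|s_k\rangle\otimes|\varepsilon_k\rangle$ is an (approximate) eigenvector with eigenvalue $x$. That eigenvector check is in fact \emph{all} the paper's proof contains, so you go further in three respects. First, you justify $j_t(\alpha)=\alpha$ by noting that $\alpha$ commutes with $L$, $L^*$ and the noise differentials, hence with $U_t$; the paper writes down the integrated increment without comment, and your observation correctly isolates where the absence of a Hamiltonian is used. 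Second, and most substantively, you identify the coarse-grained spectral projection $\mathcal{P}_x^{\Delta}=\sum_k\mathbb{I}\otimes|s_k\rangle\langle s_k|\otimes\mathcal{P}_k^{\Gamma}$ and apply the L\"uders rule to \emph{derive} a state of the form \eqref{eigenvector}, with the specific weights $|q_k|^2\propto\langle s_k|\rho_{\sigma}|s_k\rangle\,\|\mathcal{P}_k^{\Gamma}|\psi_{\Gamma}(0)\rangle\|^2$. The paper deliberately does not do this: its proof establishes only the final sentence of the proposition (any choice of $q_k$ reproduces the outcome $x$), and the remark that follows insists the $q_k$ are a modelling choice, realised differently in Definitions \eqref{M1} and \eqref{M2}. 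Your derivation is therefore a strengthening of one branch of that choice (it essentially reproduces the Bayesian weights \eqref{weights}) rather than a proof of a different statement; if you keep it, flag that it presupposes $\mathcal{P}_x^{\Delta}$ is the projection actually implemented, which is precisely the freedom the proposition means to leave open. Third, your spectral bookkeeping --- reading the ``eigenspace'' of the continuous-spectrum operator $B_T$ as the band $\mathbf{1}_{[x/\sigma_k-\epsilon,\,x/\sigma_k+\epsilon]}(B_T)$, checking the bands are pairwise disjoint, excluding $x=0$, and tracking the sector-dependent precision $\sigma_k\epsilon$ --- is absent from the paper and is a genuine gain in rigor; none of it invalidates the paper's shorter argument, but it is what makes the statement well posed.
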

\begin{proof}
The change in the price over time $T$ is given by:
\begin{align*}
\int_0^Tdj_t(X)&=1\otimes\sum_{k=1}^K\sigma_k|s_k\rangle\langle s_k|\otimes\Big(\int_0^TidA^{\dagger}_t-idA_t\Big)
\end{align*}
Now consider a projective measurement, and assume that we obtain the value $x$. Consider a vector $|\phi_x\rangle$:
\begin{align*}
|\phi_x\rangle&=|\psi_{mkt}\rangle\otimes\sum_{k=1}^Kq_k|e_k\rangle\otimes|\varepsilon_k\rangle
\end{align*}
For any values of $q_k$, such that $\sum_{k=1}^K|q_k|^2=1$, we have:
\begin{align*}
\bigg(\int_0^Tdj_t(X)\bigg)|\phi_x\rangle&=x|\phi_x\rangle
\end{align*}
\end{proof}
\begin{remark}
Since any combination of $q_k$ will yield an eigenstate for the measured price: $x$ (to the required precision), further modelling assumptions are required regarding what occurs as a result of the measurement, and specifically what information around the market volatility one obtains. We suggest two possibilities. First we consider the case whereby one obtains information regarding the market volatility as well as the market price. It should be noted that we have:
\begin{align*}
[j_t(\alpha),j_t(X)]&=0
\end{align*}
which means that measuring each of variables is possible, without impacting the other.

Secondly, we explore the case whereby one obtains no information regarding the volatility state. Whilst we have obtained the measured quantity $x$, referring to equation \eqref{Brownian}, we do not know whether we have a large volatility ($\sigma_k$) and small measured value for the Brownian motion ($z_k$), or a small volatility and large measured value for the Brownian motion.
\end{remark}
\begin{definition}[Joint Volatility and Price Measurement]\label{M1}
For the joint volatility and price measurement, we assume that after trading with the market we obtain information regarding both the volatility of the market, and its' current price. We represent this measurement process by $\mathcal{M}_1$, so that the collapsed state after the measurement is given by:
\begin{align}\label{M1_state}
\mathcal{M}_1(\rho)&=\frac{\mathcal{P}_k\rho\mathcal{P}_k}{Tr[\mathcal{P}_k\rho\mathcal{P}_k]}\\
\mathcal{P}_k&=1\otimes|s_k\rangle\langle s_k|\otimes\mathcal{P}^{\Gamma}_k\nonumber
\end{align}
So that after a measurement is taken, with the measured value $x$ together with volatility $\sigma_m$, the market state is given by:
\begin{align*}
\rho_x&=|\psi_{mkt}\rangle\langle\psi_{mkt}|\otimes|s_m\rangle\langle s_m|\otimes|\varepsilon_m\rangle\langle\varepsilon_m|\\
|\varepsilon_n\rangle\langle\varepsilon_n|&=\frac{\mathcal{P}^{\Gamma}_n\Big(\int_0^TidA^{\dagger}_t-idA_t\Big)|\psi_{\Gamma}(0)\rangle\langle\psi_{\Gamma}(0)|\mathcal{P}^{\Gamma}_n}
{Tr\Big[\mathcal{P}^{\Gamma}_n\Big(\int_0^TidA^{\dagger}_t-idA_t\Big)|\psi_{\Gamma}(0)\rangle\langle\psi_{\Gamma}(0)|\mathcal{P}^{\Gamma}_n\Big]}
\end{align*}
Where, as above, under $\mathcal{P}^{\Gamma}_n$, the Brownian motion takes the value:
\begin{align*}
z_n&=\frac{x}{\sigma_n}\pm\epsilon
\end{align*}
\end{definition}
\begin{definition}[Bayesian Approach]\label{M2}
Under the measurement process given by definition \eqref{M1}, by trading the market we obtain information regarding both the price and the volatility. If we assume, that the only result of the measurement is the price $x$, we can use the Bayesian rule to calculate the values for $q_k$. If we assume that the volatility state is given by:
\begin{align*}
\rho_{\sigma}&=\sum_{k=1}^KP(\sigma_k)|s_k\rangle\langle s_k|
\end{align*}
Then we can set $q_k$ using the probability of finding $\sigma_k$, given the measured value for the price, $x$:
\begin{align}\label{weights}
|q_k|^2&=P\big(\sigma_k|j_T(X)\in [x-\epsilon,x+\epsilon]\big)\\
&=\frac{P(j_T(X)\in [x-\epsilon,x+\epsilon]|\sigma_k])P(\sigma_k)}{\sum_{l=1}^LP(X_t\in [x-\epsilon,x+\epsilon]|\sigma_l)P(\sigma_l)}\nonumber
\end{align}
To calculate the weights, we have that $P(\sigma_k)$ are specified in the initial condition ($\rho_{\sigma}$, equation \eqref{prod_state}) and further that:
\begin{align*}
P(j_T(X)\in [x-\epsilon,x+\epsilon]|\sigma_k])&=\frac{1}{\sqrt{2\pi\sigma_k^2t}}\int_{x-\epsilon}^{x+\epsilon}exp\Big(-\frac{u^2}{2\sigma_k^2t}\Big)du
\end{align*}
\end{definition}
\section{A Comment on Unitary vs Non-Unitary Time Evolution:}\label{Comment_U_NU}
Before proceeding further, we clarify two different ways we can model the evolution of the operator: $j_t(X)$, and the resulting probability distribution for the results of possible future measurements. First, in section \eqref{MC} we consider the {\em unitary} evolution of $j_t(X)$, whereby we calculate the resulting probability distribution using a Monte-Carlo simulation. After each Monte-Carlo time-step, we make a measurement of the type described in definition \eqref{M1}.

Next, we consider the non-unitary evolution of  $j_t(X)$, where we calculate the resulting probability distribution using a partial differential equation approach.

Then in figure \eqref{UvsNU}, we show that this difference is not just numerical inaccuracy (for example Finite Difference discretization vs Monte-Carlo noise). The two different ways of considering future evolution of the price incorporate differing amounts of market information as we go along, and yield materially different results for the probability distribution.
\subsection{Unitary Evolution: Monte-Carlo Simulation}\label{MC}
We now consider running a Monte-Carlo simulation. Broadly speaking, we look at the following process:
\begin{enumerate}
\item[1)] The underlying evolves by a small time-step. For example, we look at the evolution over a time-step of 1 business day, which (depending on what calendar assumptions you apply) equates to roughly $dt\approx 0.004$.
\item[2)] This evolution is unitary, being driven by the unitary operator described by the stochastic process \eqref{unitary}. This stochastic process adds some noise to the Boson Fock space.
\item[3)] Once a measurement is taken, the degree of noise added depends on the value of the volatility eigenvalue: $\sigma_k$, that one gets from the measurement.
\item[4)] Since there is no system Hamiltonian, the system remains in this chosen volatility eigenstate for that path.
\item[5)] The system then starts again, at step 1), ready to undergo further unitary time evolution. Again, the system remains in the chosen volatility eigenstate, since there is no system Hamiltonian.
\end{enumerate}
In figure \eqref{Q_MC} we show the results of 1M Monte-Carlo paths, where we have used $K=31$ with the $\rho_{\sigma}$ being given by \eqref{rho_sig}. We have used $\sigma_k=0.04+(k*0.01)$, $i=1,\dots,31$. The simulated distribution is strongly non-Gaussian, with an excess kurtosis of 57\%.
\begin{figure}
\includegraphics[scale=0.9]{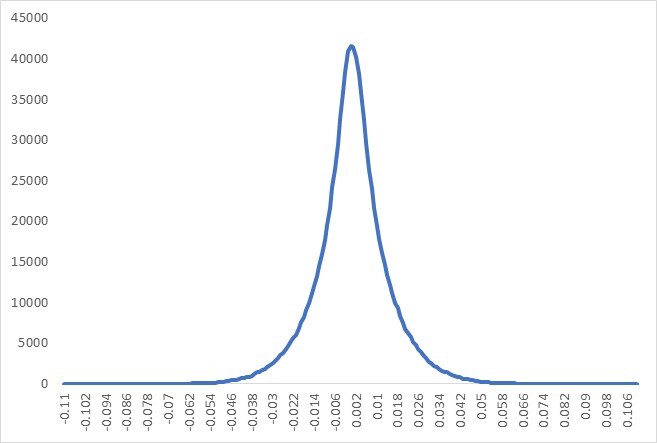}
\caption{Simulation results from 1M Monte-Carlo paths. There is 57\% excess Kurtosis.}\label{Q_MC}
\end{figure}
\subsection{Non-Unitary Time Evolution: PDE Approach}\label{non_u_app}
In order to consider the PDE method, as applied to the model defined by Hilbert space \eqref{Hilbert_2} and variance operator \eqref{L_no_Ham}, we first define an operator valued function:
\begin{align*}
F:\mathbb{R}^+\times \mathcal{L}[\mathcal{H}]\rightarrow\mathcal{L}[\mathcal{H}]
\end{align*}
where $\mathcal{L}[\mathcal{H}]$ represents linear operators on the Hilbert space $\mathcal{H}$, and here $\mathbb{R}^+$ represents the time axis. Then we can define an expectation at time $t$, conditional on the value at $t=0$:
\begin{align}\label{expectation}
u(t,x)&=E^{\rho}[f(t,j_t(X))|j_0(X)=x]\\
E^{\rho}[A]&=Tr[\rho A]\nonumber
\end{align}
We can derive a Kolmogorov backward equation by expanding in powers of $dj_t(X)$:
\begin{align*}
dF(t,j_t(X))&=\frac{\partial F}{\partial t}dt+\frac{\partial F}{\partial x}(t, j_t(X))dj_t(X)+\frac{1}{2}\frac{\partial^2 F}{\partial x^2}dj_t(X^2)
\end{align*}
then adjusting the expectation defined by \eqref{expectation}, to ensure $F(t,j_t(X))$ is a Martingale:
\begin{align*}
E^{\rho}[dF(t,j_t]&=0
\end{align*}
This leads to:
\begin{align}\label{KBE}
E^{\rho}\Big[\frac{\partial F}{\partial t}+\frac{1}{2}\frac{\partial^2 F}{\partial x^2}dj_t(X^2)\Big]&=0\nonumber\\
\frac{\partial u}{\partial t}+\frac{1}{2}\frac{\partial^2 u}{\partial x^2}\Big(\frac{1}{31}\sum_{k=1}^{31}\sigma_k^2\Big)&=0\nonumber\\
\frac{\partial u}{\partial t}+0.02\frac{\partial^2 u}{\partial x^2}&=0
\end{align}
Where in the second line we have applied the state given by \eqref{rho_sig}, with $K=31$, and $\sigma_k=0.04+(k*0.01)$, $i=1,\dots,31$, as per section \eqref{MC}. It is clear that equation \eqref{KBE} represents the Kolmogorov Backward equation for a Gaussian process with an annualized volatility of 20\%.
\begin{figure}
\includegraphics[scale=0.75]{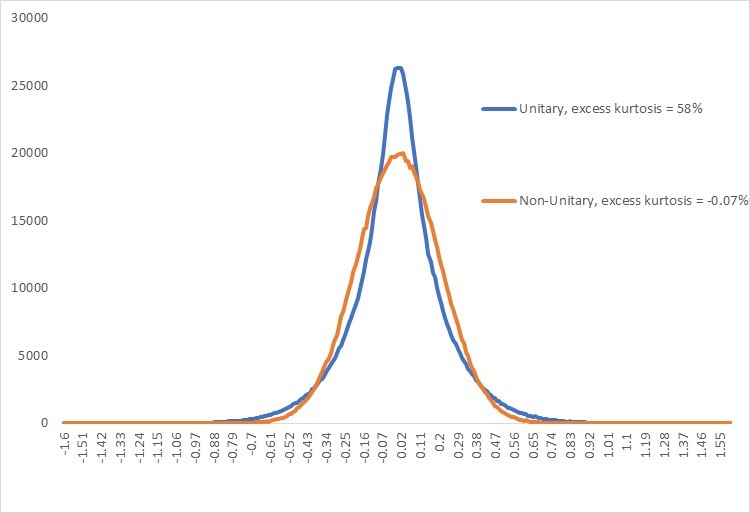}
\caption{Results from 1M Monte-Carlo paths, 1Y simulation. We contrast the probability distribution obtained from the PDE, to the results from the unitary time evolution. The volatility space is $\mathbb{C}^{31}$, with equally spaced eigen-values ranging from 5\% to 35\%. Under the non-unitary evolution the action of the partial trace means that we do not track the changing volatility state as time progresses.}\label{UvsNU}
\end{figure}

Figure \eqref{UvsNU} shows the results from a 1Y Monte-Carlo simulation of the probability distribution underlying the solution to equation \eqref{KBE}, versus the 1Y simulation for the unitary process described in section \eqref{MC}. To understand the difference, let:
\begin{align*}
\rho&=\rho_{mkt}\otimes\rho_{\sigma}\otimes\rho_{\Gamma}
\end{align*}
When we apply expectation \eqref{expectation}, what we want to do is average out the impact of the noise, and of the uncertain volatility level. Thus, we wish to look at the evolution of operators acting on a reduced density matrix:
\begin{align}\label{red_dens}
\rho_{mkt}&=Tr_{\mathcal{H}_{\sigma}\otimes\Gamma}[\rho]
\end{align}
Crucially, by taking the partial trace, we destroy the unitary nature of the time evolution. We lose the information contained in the volatility space, and in the Fock space $\Gamma$. The evolution of operators acting on a reduced density matrix such as \eqref{red_dens} is in general non-unitary. See \cite{BP} for more detail.

Note that if we do not take the partial trace over the volatility space, we will end up with $K$ different partial differential equations, and will only know which we need to solve, once the measurement has been taken.
\section{Modelling with Uncertain Volatility Case I:}\label{caseI}
\subsection{Volatility Hamiltonian Function:}
In the stochastic process defined by \eqref{unitary} the drift term is defined by the system Hamiltonian: $H$. By assuming $H=0$, the observable $j_t(X)$ is a Martingale in the sense that, taking expectation over the Fock space, to yield an operator acting on $\mathcal{H}_{mkt}\otimes\mathcal{H}_{\sigma}$, we have:
\begin{align*}
E^{\Gamma}[j_T(X)|j_t(X)]&=j_t(X)
\end{align*}
In this section we consider the impact of setting:
\begin{align}\label{Hamiltonian}
H&=\mathbb{I}\otimes H_{\sigma}\text{, }H_{\sigma}=\frac{\nu^2}{2}\Delta_{\sigma} + V_{\sigma}
\end{align}
Here, $\Delta_{\sigma}$ represents a kinetic energy term allowing the volatility state to evolve. $V_{\sigma}$ represents the potential energy.
\begin{proposition}\label{vol_prop_1}
Let the system Hilbert space be given by \eqref{Hilbert_2}, and let the time evolution be described by the unitary operator: $U_t$ in equation \eqref{unitary}, with $L$ given by \eqref{L_no_Ham} and $H$ by \eqref{Hamiltonian}. Then, under unitary time evolution, the traded price operator follows the following process:
\begin{align}
dj_t(X)&=j_t(\alpha)dA_t+j_t(\alpha^{\dagger})dA^{\dagger}_t\text{, }dj_t(\alpha)=j_t(\theta_{\alpha})dt\nonumber\\
\alpha&=[L^*,X]\text{, }\theta_{\alpha}=i[H,\alpha]\nonumber
\end{align}
\end{proposition}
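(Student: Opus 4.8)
The plan is to apply the quantum Itô calculus of Hudson–Parthasarathy to the operator $j_t(X)$ and then, separately, to $j_t(\alpha)$, using the fact that both $X$ and $\alpha$ are (in the appropriate sense) operators on the system space whose time-evolved versions satisfy a first-order stochastic differential equation of the same structural type. First I would recall the generic flow equation: for a bounded system operator $Y$, the quantum stochastic flow $j_t(Y)=U_t^*(Y\otimes\mathbb{I}\otimes\mathbb{I})U_t$ driven by \eqref{unitary} satisfies
\begin{align*}
dj_t(Y)&=j_t\big(i[H,Y]+\mathcal{L}(Y)\big)dt+j_t([L^*,Y])dA_t+j_t([Y,L])dA^{\dagger}_t,
\end{align*}
where $\mathcal{L}(Y)=L^*YL-\tfrac12(L^*LY+YL^*L)$ is the Lindbladian. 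This is the standard consequence of \eqref{unitary} together with the Itô table $dA_tdA^{\dagger}_t=dt$ (all other products vanishing), and it is exactly the formula cited from \cite{AB} and \cite{HP} that already underlies \eqref{dX}.

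Next I would specialize to $Y=X$. With $H=\mathbb{I}\otimes H_\sigma$ from \eqref{Hamiltonian} and $L=-i\partial_x\otimes\big(\sum_k\sigma_k|s_k\rangle\langle s_k|\big)\otimes\mathbb{I}$ from \eqref{L_no_Ham}, one checks the two algebraic facts that make the statement collapse to the claimed form: first, $[H,X]=0$, since $H$ acts trivially on $\mathcal{H}_{mkt}$ while $X$ acts only on $\mathcal{H}_{mkt}$; second, the Lindblad term $\mathcal{L}(X)$ vanishes, because $[L^*L,X]$ and the relevant combination reduce to $L^*[L,X]+[L^*,X]L$ with $[L,X]$ and $[L^*,X]$ both equal to the same multiplication-type operator $-i\otimes\sum_k\sigma_k|s_k\rangle\langle s_k|\otimes\mathbb{I}$ (a constant in $x$), so the symmetrized second-order terms cancel — this is the finite-dimensional analogue of \eqref{Com_Simple} where $\mathcal{L}(X)=0$ for the free-particle generator. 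Hence $dj_t(X)=j_t([L^*,X])dA_t+j_t([X,L])dA^{\dagger}_t=j_t(\alpha)dA_t+j_t(\alpha^\dagger)dA^{\dagger}_t$ with $\alpha=[L^*,X]$, matching \eqref{dX}.

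Then I would repeat the flow equation with $Y=\alpha=[L^*,X]=-i\otimes\sum_k\sigma_k|s_k\rangle\langle s_k|\otimes\mathbb{I}$. Here the point is that $\alpha$ commutes with $L$ and $L^*$ (it is diagonal in the $|s_k\rangle$ basis and acts trivially on $\mathcal{H}_{mkt}$ and on $\Gamma$, while $L$ is also diagonal in that basis), so $[L^*,\alpha]=[\alpha,L]=0$ and $\mathcal{L}(\alpha)=0$; the noise terms drop out entirely and only the Hamiltonian term survives, giving $dj_t(\alpha)=j_t\big(i[H,\alpha]\big)dt=j_t(\theta_\alpha)dt$ with $\theta_\alpha=i[H,\alpha]$. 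Note $[H,\alpha]=[\mathbb{I}\otimes H_\sigma,\ -i\otimes\sum_k\sigma_k|s_k\rangle\langle s_k|\otimes\mathbb{I}]$ is generically nonzero once $H_\sigma$ is not diagonal in the $|s_k\rangle$ basis (the kinetic term $\Delta_\sigma$ ensures this), which is precisely why the Hamiltonian makes the volatility state evolve.

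The main obstacle is not conceptual but a matter of rigor: $X$ and $\partial_x$ are unbounded, so the flow equation above, and in particular the manipulations $U_t U_t^*=\mathbb{I}$ used when deriving it, must be justified on a suitable invariant domain (e.g. an exponential/coherent-vector domain in $\Gamma$ tensored with a core such as Schwartz space in $\mathcal{H}_{mkt}$), and one must check that the commutators $[L^*,X]$, $[L^*L,X]$, $[H,\alpha]$ are densely defined with the claimed closures. I would handle this by noting that $[L^*,X]$ is in fact bounded (it is $-i$ times a projection-valued sum), so $\alpha$ and $\theta_\alpha$ are bounded operators and the second SDE is unproblematic; for the first SDE I would either cite the domain hypotheses under which \eqref{dX} was already established in \cite{AB}, \cite{HP} and observe that adding $H=\mathbb{I}\otimes H_\sigma$ does not disturb them since $[H,X]=0$, or restrict attention to the regularized/bounded-function version $f(j_t(X))$ as is done implicitly in the Monte-Carlo and PDE sections. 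With that caveat absorbed, the proposition follows by the two applications of the flow equation described above.
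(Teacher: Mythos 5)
Your proof follows the same route as the paper: write down the general Hudson--Parthasarathy flow equation for $j_t(Y)$ driven by \eqref{unitary}, then specialize to $Y=X$ (using $[H,X]=0$) and to $Y=\alpha$ (using $[L^*,\alpha]=0$, which kills the noise terms and leaves only $\theta_\alpha=i[H,\alpha]$). You are in fact slightly more complete than the paper, which states only $[H,X]=0$ and does not explicitly verify that the Lindblad part of the drift $\theta_x$ vanishes (a consequence, as you note, of $L=L^*$ and $[L^*,X]$ commuting with $L$), nor does it address the domain questions for the unbounded operators $X$ and $\partial_x$ that you flag.
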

\begin{proof}
First note that using the unitary process given by \eqref{unitary}, we have:
\begin{align}\label{dX,dalpha}
dj_t(X)&=j_t(\theta_x)dt+j_t(\alpha)dA_t+j_t(\alpha^{\dagger})dA^{\dagger}_t\\
dj_t(\alpha)&=j_t(\theta_{\alpha})dt+j_t(\gamma)dA_t+j_t(\gamma^{\dagger})dA^{\dagger}_t\nonumber\\
\alpha&=[L^*,X]\text{, }\gamma=[L^*,\alpha]\nonumber\\
\theta_x&=i[H,X]-\frac{1}{2}[L^*LX+XL^*L-2L^*XL]\nonumber\\
\theta_{\alpha}&=i[H,\alpha]-\frac{1}{2}[L^*L\alpha+\alpha L^*L-2L^*\alpha L]\nonumber
\end{align}
Under \eqref{Hamiltonian} we find that: $[H,X]=0$, so that:
\begin{align*}
dj_t(X)&=j_t(\alpha)dA_t+j_t(\alpha^{\dagger})dA^{\dagger}_t
\end{align*}
Furthermore, we have: $[L^*,\alpha]=0$, so that $\alpha$ evolves deterministically under the Hamiltonian: $H=\mathbb{I}\otimes\Big(\frac{\nu^2}{2}\Delta_{\sigma}+V_{\sigma}\Big)$
\end{proof}
\subsection{A Monte-Carlo Simulation:}
In order to setup a simulation for the process with non-zero Hamiltonian: \eqref{Hamiltonian}, we first clarify the projective measurement process that we wish to use (see for example \cite{NC} section 2.2). From above, we have:
\begin{align*}
\alpha&=[L^*,X]\\
&=-i\otimes\sum_{k=1}^K\sigma_k|s_k\rangle\langle s_k|\otimes\mathbb{I}
\end{align*}
Each Monte-Carlo step in the simulation described in section \eqref{MC} is $dt=0.004\approx 1$ day. In the non-zero kinetic energy case we proceed as follows:
\begin{enumerate}
\item[1)] At time $t_n=ndt$, we take a measurement of the type described in definition \eqref{M1}. This results in $\sigma_{k_n}$. The resulting collapsed state is given by equation \eqref{M1_state}. In other words, the result of the measurement is a collapse in the volatility space to yield a volatility eigenvalue, and in the Fock space, yielding the measured value for the Brownian noise.
\item[2)] Apply the unitary operator: $U_{dt}=exp(-iH_{\sigma}dt)$ to reflect the unitary evolution of the volatility state over time-step $dt$:
\begin{align*}
\rho(t_{n+dt})&=U_{dt}\mathcal{M}_1(\rho(t_n))U_{dt}^{\dagger}
\end{align*}
\item[3)] Now the probability of obtaining $\sigma_{k_2}$ (conditional on having measured $\sigma_{k_1}$ at the beginning of the time-step) is given by:
\begin{align}\label{vol_cond_prob}
p(\sigma_{k_2}|\sigma_{k_1})&=Tr\Big[\mathcal{M}_1\Big(U_{dt}\mathcal{M}_1\big(\rho(t_n)\big)U_{dt}^{\dagger}\Big)\Big]
\end{align}
\item[4)] So, again assuming the prior time-step volatility was $\sigma_{k_1}$, we must make a random selection using equation \eqref{vol_cond_prob}, to select the volatility for the next time-step.
\end{enumerate}
\subsection{Zero Potential Energy Case:}
In this section, let us assume that the volatility space is given by the infinite dimensional space: $\mathcal{H}_{\sigma}=L^2(\mathbb{\mathcal{K}})$, where $\mathcal{K}$ is a compact subset of $\mathbb{R}$. Since we have redefined the volatility Hilbert space, we must also now redefine the operator $L$, acting on $\mathcal{H}_{mkt}\otimes\mathcal{H}_{\sigma}\otimes\Gamma$:
\begin{align}\label{L_inf}
L&=-i\frac{\partial}{\partial x}\otimes\int_{\mathcal{K}}\sigma d\mu^L(\sigma)\otimes\mathbb{I}
\end{align}
Where $\mu^L$ is a projection valued measure (as given for example in \cite{Hall} Theorem 7.12). Therefore, we find:
\begin{align}\label{com_inf}
[L^*,X]&=-i\otimes\int_{\mathcal{K}}\sigma d\mu^L(\sigma)\otimes\mathbb{I}
\end{align}
We can now set the Hamiltonian \eqref{Hamiltonian} to that defining the free Sch{\"o}dinger equation:
\begin{align}\label{free_Ham}
H_{\sigma}&=-\frac{\nu^2}{2}\frac{\partial^2}{\partial\sigma^2}
\end{align}
Using this Hamiltonian, we find that (see for example \cite{Hall} Theorem 4.5) \eqref{vol_cond_prob} is given by (where $\delta t$ is the time-step):
\begin{align}\label{prob_free}
p(\sigma_{k_2}|\sigma_{k_1})&=\sqrt{\Big(\frac{1}{2\pi\nu^2\delta t}\Big)}\exp\Big(-\frac{(\sigma_{k_2}-\sigma_{k_1})^2}{2\nu^2\delta t}\Big)
\end{align}
\subsection{High vs Low Energy Case:}\label{hi_vs_lo}
Using proposition \eqref{vol_prop_1}, with the conditional probability that results from the measurement process given by equation \eqref{prob_free}, we can configure different levels of volatility uncertainty. To illustrate, we consider the unitary evolution described by updating steps 1) to 6) from section \eqref{MC}.
\begin{enumerate}
\item[1)] We simulate the random evolution of the price operator $j_t(X)$ according to proposition \eqref{vol_prop_1}.
\item[2)] We assume we follow unitary time evolution. That is, after each time-step we take a measurement, and the system collapses into a volatility eigenstate.
\item[3)] As before, the stochastic process adds noise into the Boson Fock space.
\item[4)] Again, as before, the degree of noise added depends on the value of the volatility that one gets from the measurement process.
\item[5)] However, now there is a Hamiltonian whereby at the time of the next measurement, the volatility is no longer in a fixed eigenstate: $|s_k\rangle\langle s_k|$ associated with the measured volatility eigenvalue: $\sigma_k$.
\item[6)] Therefore, when we come to make the next random time-step, and make another measurement of the system, we apply equation \eqref{prob_free} to calculate the probability for the volatility conditional on the previously observed value.
\item[7)] Thus at each time-step we require {\em two} random numbers. One to simulate the random evolution of the price variable, and one to select the volatility.
\end{enumerate}
\begin{enumerate}
\item[{\em Max Energy case:}] If one assumes $\nu\rightarrow\infty$, then the volatility will quickly drift back to a state whereby each eigenvalue is equally likely. In effect, the system ``forgets'' the measured value for the volatility very quickly. The result is that at each successive time step, each volatility is equally likely. Therefore, along each path the time-step variance eventually averages out to:
\begin{align*}
dt\int_{\mathcal{K}}\sigma^2d\mu^L(\sigma)&=\sigma_0^2dt
\end{align*}
\item[{\em Zero Energy case:}] If one assumes $\nu\rightarrow 0$, then as soon as the first measurement is taken, the volatility is fixed along that path. When we look at the localized variance along each path, there is no averaging effect. If we set (per section \eqref{HJB_app}):
\begin{align*}
\mathcal{K}&=[\sigma_{min},\sigma_{max}]
\end{align*}
Then some paths will have a constant time-step variance close to $\sigma_{min}^2dt$, and some a constant time-step variance close to $\sigma_{max}^2dt$. Therefore, the uncertainty in the volatility is maximised, and we have a {\em non-Gaussian} evolution, with maximal excess kurtosis. Ie, this represents the case shown in figure \eqref{Q_MC}.
\end{enumerate}
\subsection{Simulation Results:}
We run the simulation as per section \eqref{MC}. However, rather than fixing the volatility, we select a volatility by randomly sampling from the probability distribution \eqref{prob_free} at each step. Figure \eqref{MC_res1} shows the resulting simulated distributions for a simulation with $\sigma_{min}=0.01\%$, $\sigma_{max}=15\%$, and for $\nu=1\%$ up to $\nu=326\%$, and 100K paths. Similarly, figure \eqref{MC_res2} shows the implied volatility surface resulting from the simulation. 
\begin{figure}
\includegraphics[scale=0.45]{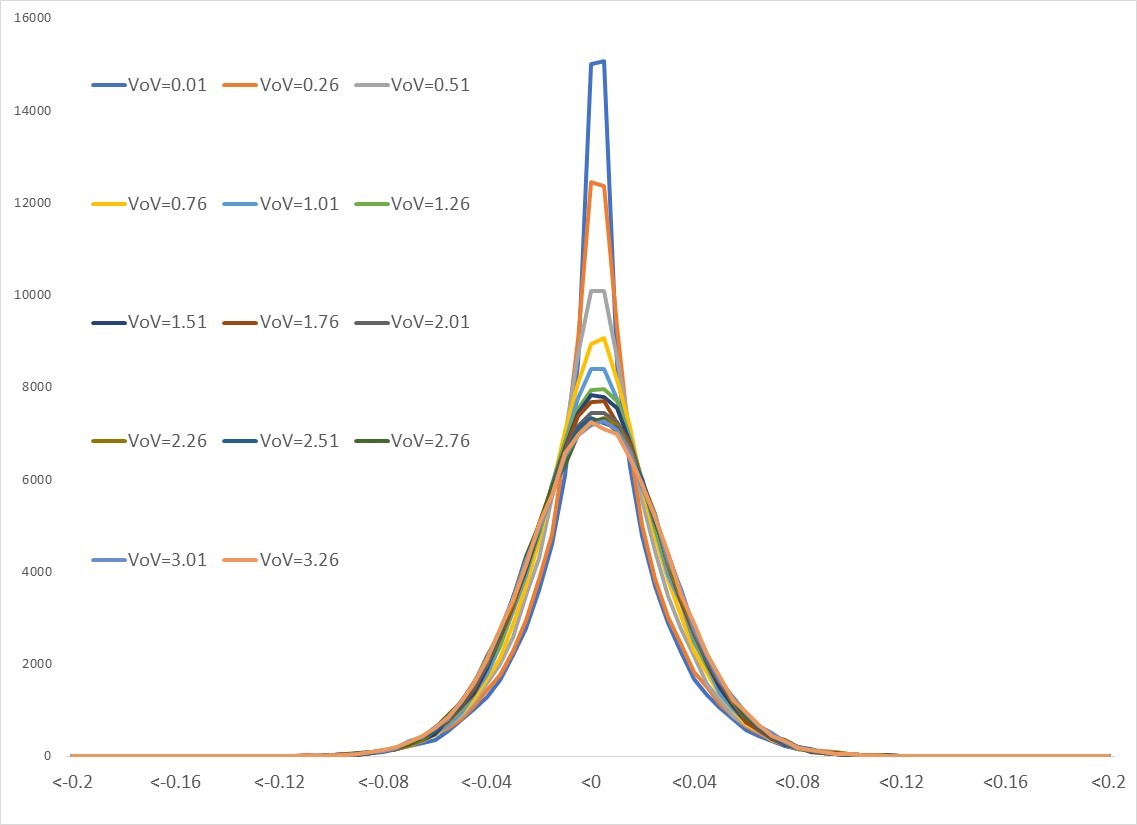}
\caption{Probability distribution results from 1 month Monte-Carlo simulation, 100K paths. Horizontal axis shows $X$, vertical axis shows number of paths.}\label{MC_res1}
\end{figure}
\begin{figure}
\includegraphics[scale=0.6]{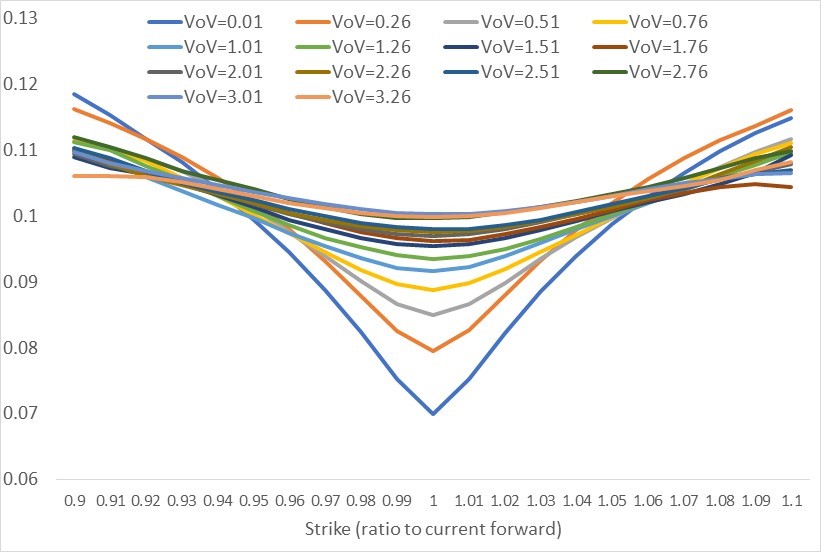}
\caption{Simulated implied volatility surfaces from a 1 month Monte-Carlo simulation, 100K paths. Horizontal axis shows the strike, vertical axis shows the implied volatility.}\label{MC_res2}
\end{figure}
As described in section \eqref{hi_vs_lo}, as $\nu^2\delta t$ increases, the Hamiltonian rapidly causes the level of uncertainty in the volatility to increase. Thus, as one moves along the path the volatility chosen becomes increasingly random. As the path gets longer, and more time-steps are added, this leads to an averaging out effect, and the resulting simulated distribution starts to look Gaussian in nature.

For lower values of $\nu^2\delta t$, the volatility is more likely to remain closer to the measured value, and the distribution resembles the fat tailed distributions shown in figures \eqref{Q_MC} and \eqref{UvsNU}.
\section{Modelling with Uncertain Volatility Case II:}\label{caseII}
In this section, we look again at the case of zero Hamiltonian, but using the Bayesian measurement approach described in definition \eqref{M2}, rather than the definition given in definition \eqref{M1}.
\subsection{Monte-Carlo Sampling vs Reality:}
In a standard Monte-Carlo simulation, for example under a local volatility model:
\begin{align*}
\delta x_n&=\sigma(x_{n-1},t_{n-1})\cdot z_n\\
z_n&\sim\mathcal{N}(0,\sqrt{\delta t_n})
\end{align*}
there is a sense that the Monte-Carlo path reflects reality. At each step along each path we track the price changes by successively drawing Gaussian random variables. The state of the market is defined by the price, which is what we are simulating.

In the case described by definition \eqref{M2}, the state of the market is defined by the volatility state, in addition to the price. To carry out the simulation we must select a random variable from the distribution:
\begin{align}\label{approx_dx2}
\delta x_n&\sim z_n\sum_{i=1}^K|q_i|^2\sigma_i\\
z_n&\sim\mathcal{N}(0,\sqrt{\delta t_n})
\end{align}
To do this, we first draw a Gaussian random number: $z_n$, before randomly selecting the volatility $\sigma_l$ based on the probability $|q_l|^2$ given by equation \eqref{weights}. Whilst this approach to Monte-Carlo simulation may converge to the true probability distribution for $\delta x_n$, it no longer reflects reality in the sense that we do not record the true volatility quantum state along each path.
\subsection{Monte-Carlo Simulation:}
We run a Monte-Carlo simulation based on equation \eqref{approx_dx2}, with 20 time-steps, each $\delta t=0.004\approx 1$day, and 100K simulation paths. We set $\mathcal{H}_{\sigma}=\mathbb{C}^K$, and investigate different values of $K$, and different values for $\epsilon$ in equation \eqref{weights}. We proceed using the following steps:
\begin{enumerate}
\item[1)] We assume that the volatility state starts in a maximum entropy state. Therefore, the initial weights are given by: $|q_k^0|^2=1/K$.
\item[2)] We select a Gaussian random variable from $\mathcal{N}(0,1)=z_k$, and randomly select $\sigma_k$ using the weights.
\item[3)] After each time-step we recalculate the weight for the next time-step: $|q^{n+1}_k|^2$ using equation \eqref{weights}. So for example if the previous time-step yielded the value $\delta x_n$, with randomly selected volatility $\sigma_k$, we would have:
\begin{align*}
|q^{n+1}_k|^2&=\frac{\frac{|q^n_k|^2}{\sqrt{2\pi\sigma_k^2\delta t}}\int^{\delta x_n+\epsilon}_{\delta x_n-\epsilon}\exp\Big(\frac{-u^2}{2\sigma_k^2\delta t}\Big)du}{\sum_{j=1}^K\frac{|q^n_j|^2}{\sqrt{2\pi\sigma_j^2\delta t}}\int^{\delta x_n+\epsilon}_{\delta x_n-\epsilon}\exp\Big(\frac{-u^2}{2\sigma_j^2\delta t}\Big)du}
\end{align*}
\item[4)] Once the weights for the next time-step are calculated, go back to step 2) and repeat the process.
\end{enumerate}
\subsection{Simulation Results:}
In this section we show the results from a Monte-Carlo simulation with 20 time-steps of $\delta t=0.004$, and 100K paths. We set the volatility space to: $\mathbb{C}^{31}$, with volatility ranging from 5\% to 35\% in 1\% intervals.

Figure \eqref{bayes_chart} shows the resulting probability distribution vs the equivalent simulation for the joint measurement approach. This shows that the difference is small. The joint measurement approach results in 58\% excess kurtosis, vs 56\% for the Bayesian approach. To investigate the reason for this small difference, we look at the individual paths taken for the simulated volatility, as subsequent measurements are taken.
\begin{figure}
\includegraphics[scale=0.35]{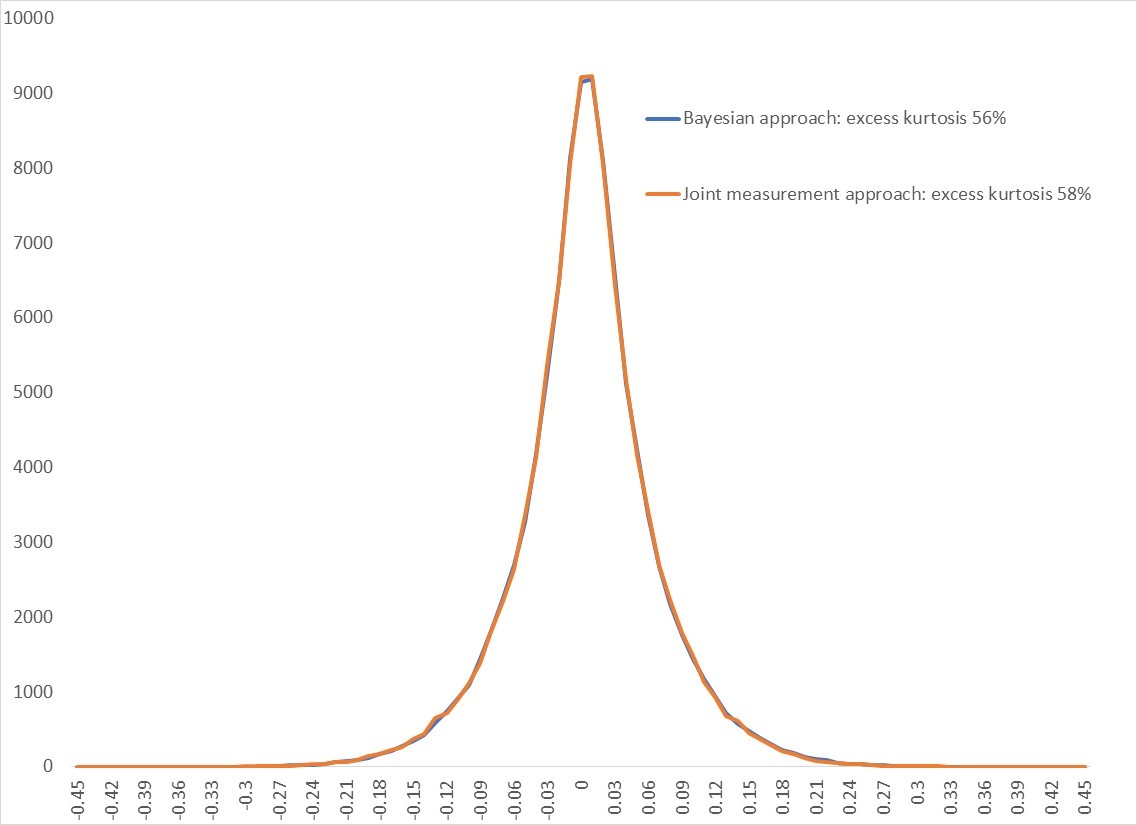}
\caption{Probability distribution for 100K paths, showing that the difference between the joint measurement approach, and the Bayesian approach, is small.}\label{bayes_chart}
\end{figure}
\begin{figure}
\includegraphics[scale=0.35]{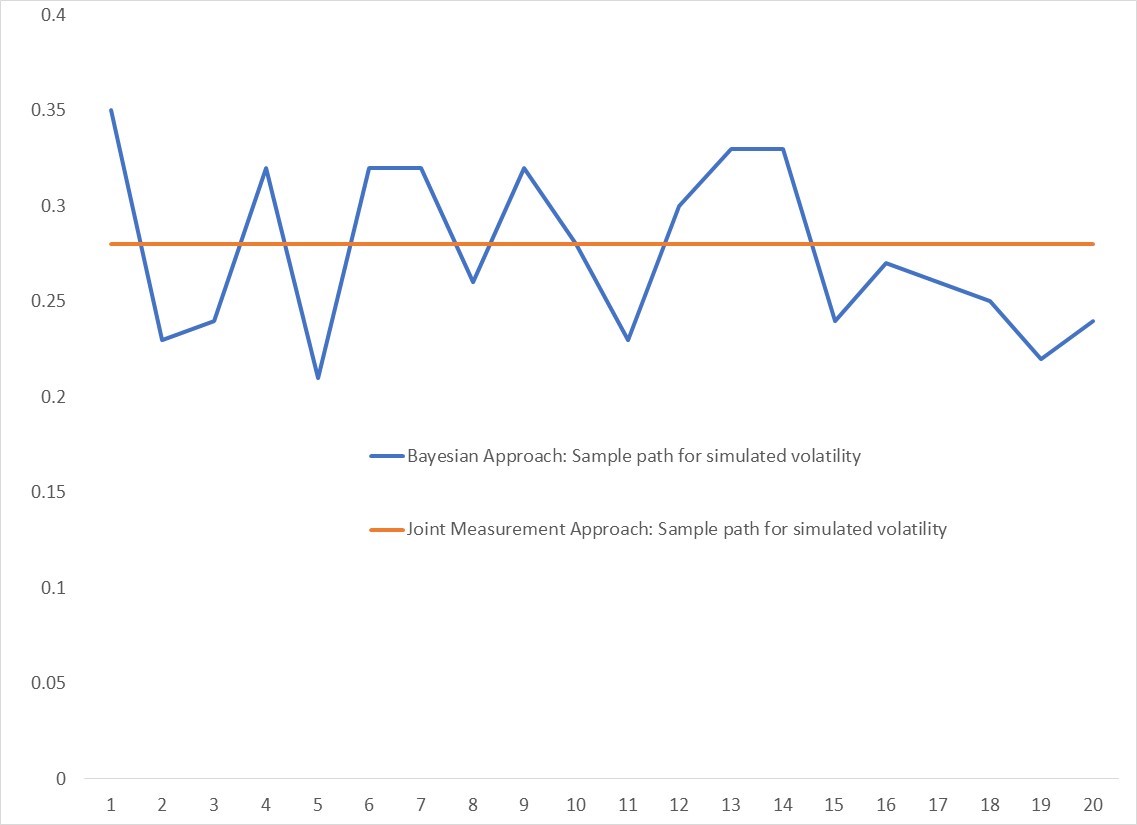}
\caption{Example path for simulated volatility, Bayes measurement vs joint measurement approaches.}\label{bayes_volpath}
\end{figure}
Figure \eqref{bayes_volpath} shows a sample path for the simulated volatility in each case.

Under the joint measurement approach, we fix on a volatility as a result of the measurement. This then does not change, due to the fact that there is no Hamiltonian. 

Under the Bayesian measurement approach, there is no fixed volatility, since this is not measured. However, as described above we randomly sample the volatility in such a way as to ensure the resulting probability distribution that results from the Monte-Carlo simulation converges to the theoretical distribution.

In order to assess the likelihood of a path with a high expected volatility from a path with a low expected volatility, figure \eqref{cond_evol} shows the expected volatility, conditional on the volatility at the previous time-step being greater than 20\%, and less than 20\%.

The separation between those paths that have a higher expected volatility and those with a lower expected volatility (based on the collapsed state at that step), for the Bayesian approach, quickly converges to a similar value to the joint measurement approach, where the volatility is fixed from the beginning.
\begin{figure}
\includegraphics[scale=0.35]{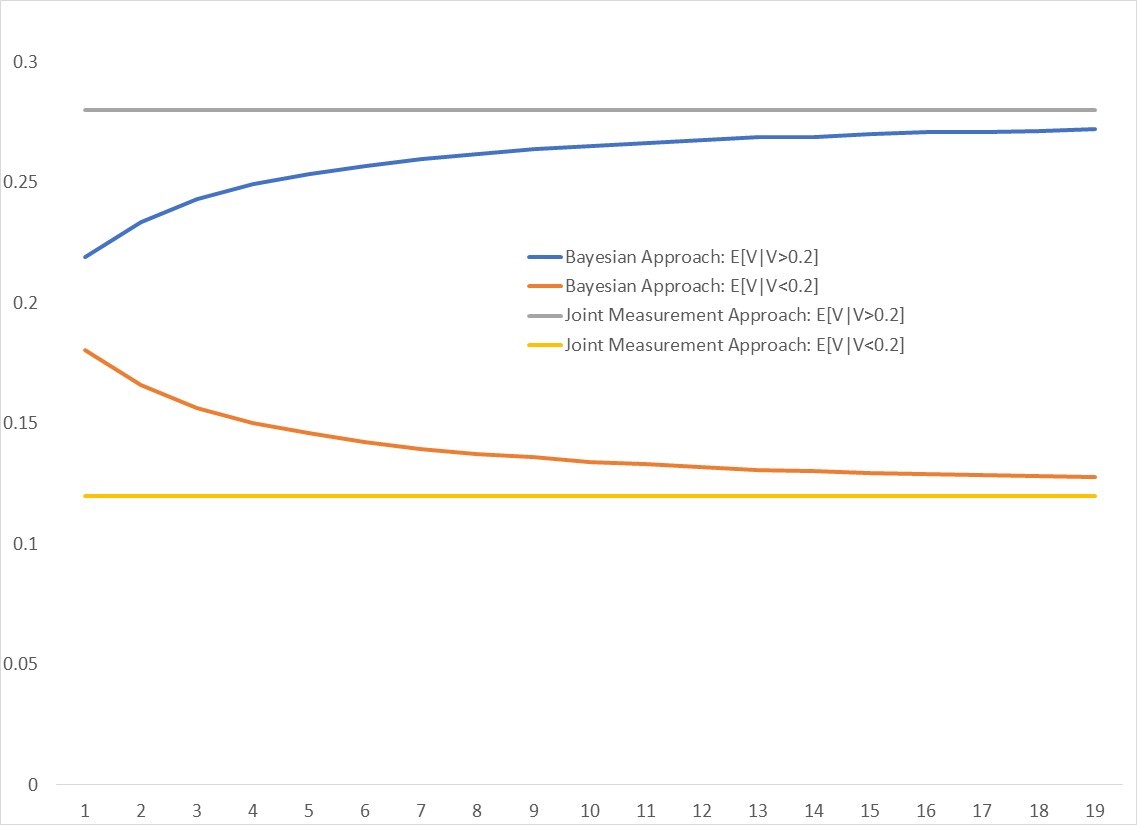}
\caption{The chart shows the expected volatility conditional on the volatility being greater than, and less than 0.2 vs the time-step.}\label{cond_evol}
\end{figure}
\section{Conclusion:}
In this article we have shown how to extend existing approaches to the modelling of the financial markets using quantum probability, by incorporating an additional Hilbert space for uncertain parameters.

In a conventional model, we fix the value of parameters in advance, meaning that the level of random noise coming from the stochastic process completely determines the changes in the variable one is measuring. We have shown in this article that once you allow for uncertainty in the parameters that configure the stochastic process, you open up different possibilities for the dynamics.
The evolution of the probability density function will depend on how much information regarding the state of the uncertain parameters is obtained during a measurement of the observables.

For example, one can construct a partial differential equation by considering the evolution of a reduced density matrix. This averages out the information contained in both the noise space, and the uncertain parameter space using a partial trace, and leads to a Gaussian probability distribution.

Alternatively one can allow for different approaches to measurement that retain the unitary nature of the evolution, and leads to non-Gaussian probability distributions.

%
%
\end{document}